\def\argmax{\mathop{\rm \arg\!\max}}
\newtheorem{theorem}{Theorem}
\newtheorem{corollary}{Corollary}
\def\ba{{\bf a}}
\def\bh{{\bf h}}
\def\bn{{\bf n}}
\def\bq{{\bf q}}
\def\br{{\bf r}}
\def\bs{{\bf s}}
\def\bu{{\bf u}}
\def\bx{{\bf x}}
\def\by{{\bf y}}
\def\bA{{\bf A}}
\def\bB{{\bf B}}
\def\bC{{\bf C}}
\def\bD{{\bf D}}
\def\bG{{\bf G}}
\def\bH{{\bf H}}
\def\bI{{\bf I}}
\def\bP{{\bf P}}
\def\bQ{{\bf Q}}
\def\bR{{\bf R}}
\def\bU{{\bf U}}
\def\bW{{\bf W}}
\def\cA{\mbox{$\mathcal{A}$}}
\def\cC{\mbox{$\mathcal{C}$}}
\def\cN{\mbox{$\mathcal{N}$}}
\def\cP{\mbox{$\mathcal{P}$}}
\def\cQ{\mbox{$\mathcal{Q}$}}
\def\cV{\mbox{$\mathcal{V}$}}
\def\bbC{\mbox{$\mathbb{C}$}}
\def\bbE{\mbox{$\mathbb{E}$}}
\def\bbV{\mbox{$\mathbb{V}$}}
\newcommand{\thickhline}{%
    \noalign {\ifnum 0=`}\fi \hrule height 1pt
    \futurelet \reserved@a \@xhline
}
\newcolumntype{"}{@{\hskip\tabcolsep\vrule width 1pt\hskip\tabcolsep}}
\title{
A Hybrid Beamforming Receiver with Two-Stage Analog Combining and Low-Resolution ADCs
}
\author{
Jinseok Choi, $^\dagger$Gilwon Lee, and Brian L. Evans \\
Wireless Networking and Communication Group, The University of Texas at Austin\\
E-mail: jinseokchoi89@utexas.edu, bevans@ece.utexas.edu\\
$^\dagger$Intel Corporation, Santa Clara, CA. E-mail: gilwon.lee@intel.com

\thanks{
The authors at The University of Texas at Austin were supported by gift funding from Huawei Technologies.
}
}
\begin{document}
\maketitle

\begin{abstract}
In this paper, we propose a two-stage analog combining architecture for millimeter wave (mmWave) communications with hybrid analog/digital beamforming and low-resolution analog-to-digital converters (ADCs).
We first derive a two-stage combining solution by solving a mutual information (MI) maximization problem without a constant modulus constraint on analog combiners.
With the derived solution, the proposed receiver architecture splits the analog combining into a channel gain aggregation stage followed by a spreading stage to maximize the MI by effectively managing quantization error.
We show that the derived two-stage combiner achieves the optimal scaling law with respect to the number of radio frequency (RF) chains and maximizes the MI for homogeneous singular values of a MIMO channel.
Then, we develop a two-stage analog combining algorithm to implement the derived solution under a constant modulus constraint for mmWave channels. 
Simulation results validate the algorithm performance in terms of MI.
\end{abstract}

\begin{IEEEkeywords}
Two-stage analog combining, low-resolution ADCs, mutual information.
\end{IEEEkeywords}

\section{Introduction}
\label{sec:intro}

Millimeter wave communications have attracted large research interest as a promising 5G technology \cite{pi2011introduction, rappaport2013millimeter}.
Utilizing multi-gigahertz bandwidth can potentially achieve an order of magnitude increase in achievable rate \cite{andrews2014will}.
Significant power consumption at receivers with large antenna arrays, however, is considered as one of the primary challenges to address. 
In this paper, we consider hybrid beamforming receivers equipped with low-resolution ADCs to resolve such a challenge by reducing both the number of RF chains and ADC bits.
For mmWave channels, hybrid beamforming techniques were developed by leveraging the sparsity of the channel in the beamspace \cite{el2014spatially, alkhateeb2014channel, bogale2014beamforming, rusu2015low, chen2015iterative, liang2014low, alkhateeb2015limited,mendez2016hybrid}.
Orthogonal matching pursuit (OMP)-based algorithms were proposed in \cite{el2014spatially, alkhateeb2014channel, bogale2014beamforming, rusu2015low, chen2015iterative}  to design analog beamformer by using array response vectors (ARVs).
The OMP-based algorithm proposed in \cite{el2014spatially} was improved by iteratively updating the phases of the phase shifters \cite{chen2015iterative} and by combining OMP and local search to reduce the computational complexity \cite{rusu2015low}.
A channel estimation technique was also developed by using hierarchical multi-resolution codebook-based ARVs for hybrid systems \cite{alkhateeb2014channel}.

Unlike the previous work \cite{ el2014spatially, alkhateeb2014channel, bogale2014beamforming, rusu2015low, chen2015iterative, liang2014low, alkhateeb2015limited,mendez2016hybrid}, hybrid beamforming systems with low-resolution ADCs were investigated in \cite{venkateswaran2010analog,choi2017resolution, choi2018user, mo2017hybrid,  abbas2017millimeter, roth2018comparison}. 
In \cite{venkateswaran2010analog}, an analog combiner was designed by minimizing the mean squared error (MSE) including the quantization error without a constant modulus constraint.
In \cite{mo2017hybrid, abbas2017millimeter}, a singular value decomposition (SVD)-based analog combiner was implemented by using an alternating projection method.
It was shown in \cite{mo2017hybrid} that hybrid MIMO systems with low-resolution ADCs provide the superior performance and power tradeoff compared to infinite-resolution ADC systems. 
In addition, a subarray antenna structure with low-resolution ADCs was investigated in \cite{roth2018comparison}.
The analysis in \cite{mo2017hybrid, abbas2017millimeter, roth2018comparison} provided insights for the hybrid architecture with low-resolution ADCs.
The coarse quantization effect, however, was not explicitly taken into account in the analog combiner design.

%

\begin{figure}[!t]\centering
	\includegraphics[scale = 0.48]{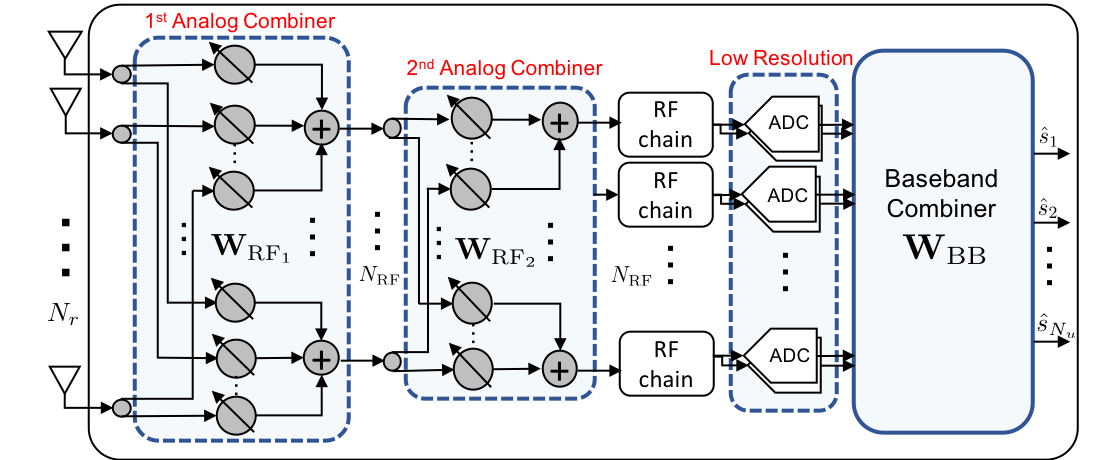}
	\caption{A receiver architecture with two-stage analog combining, low-resolution ADCs and digital combining.} 
	\label{fig:receiver}
\end{figure}

In this paper, we propose a two-stage analog combining architecture for hybrid systems with low-resolution ADCs as shown in Fig. \ref{fig:receiver}.
To design a two-stage analog combiner for the proposed architecture by considering the coarse quantization effect, we formulate a MI maximization problem without imposing a constant modulus constraint on an analog combiner. 
We first derive a near optimal analog combining solution for  general channels.
The derived solution can be decomposed into two parts: a channel gain aggregation function that captures channel gains into the lower dimension and a spreading function that evenly spreads the aggregated gains over all available RF chains to reduce quantization error.
We show that the derived solution achieves the optimal scaling law which scales logarithmically with respect to the number of RF chains whereas a conventional optimal solution only achieves a bounded MI.  
The derived solution also maximizes the MI when the singular values of a MIMO channel are the same.
We further propose an ARV-based two-stage analog combining algorithm for mmWave channels to implement the derived solution under the constant modulus constraint. 
Simulation results demonstrate that the proposed two-stage analog combining algorithm outperforms conventional algorithms.

{\it Notation}: $\bf{A}$ is a matrix and $\bf{a}$ is a column vector. 
$\mathbf{A}^{H}$ and $\mathbf{A}^T$  denote conjugate transpose and transpose. 
$[{\bf A}]_{i,:}$ and $ \mathbf{a}_i$ indicate the $i$th row and column vector of $\bf A$. 
We denote $a_{i,j}$ or $[\bA]_{i,j}$ as the $\{i,j\}$th element of $\bf A$ and $a_{i}$ as the $i$th element of $\bf a$. $\lambda_i\{{\bf A}\}$ denotes the $i$th largest singular value of ${\bf A}$. 
$\mathcal{CN}(\mu, \sigma^2)$ is the complex Gaussian distribution with mean $\mu$ and variance $\sigma^2$. 
$\mathbb{E}[\cdot]$ and $\bbV[\cdot]$ represent an expectation and variance operators, respectively.
The correlation matrix is denoted as ${\bf R}_{\bf xy} = \mathbb{E}[{\bf x}{\bf y}^H]$.
The diagonal matrix $\rm diag\{\bf A\}$ has $\{a_{i,i}\}$ at its $i$th diagonal entry, and $\rm diag \{\bf a\}$ or ${\rm diag}\{{\bf a}^T\}$ has $\{a_i\}$ at its $i$th diagonal entry. 
${\rm blkdiag}\{\bA_1,\cdots,\bA_N\}$ is a block diagonal matrix with diagonal entries $\bA_1,\cdots,\bA_N$.
${\bf I}$ denotes the identity matrix with a proper dimension and we indicate the dimension $N$ by $\bI_N$ if necessary.
$\bf 0$ denotes a matrix that has all zeros in its elements with a proper dimension.
$\|\bf A\|$ represents $L_2$ norm. 
$|\cdot|$ indicates an absolute value, cardinality, and determinant for a scalar value $a$, a set $\cA$, and a matrix $\bA$, respectively.
${\rm Tr}\{\cdot\}$ is a trace operator and  $x(N)\sim y(N)$ indicates $\lim_{N\to\infty}\frac{x}{y}=1$.


\section{System Model}
\label{sec:system}

We consider a single-cell uplink network.
A base station (BS) is equipped with $N_r$ receive antennas in uniform linear arrays (ULA) and $N_{\rm RF}$ RF chains ($N_{\rm RF} < N_r$), and serves $N_u$ users each with a single transmit antenna ($N_u \leq N_{\rm RF}$).
We consider that each channel ${\bf h}_{\gamma,k}$ is the sum of $L_k$ propagation paths for user $k$ \cite{ertel1998overview}.
The number of channel paths $L_k$ is considered to be small due to the sparse nature of mmWave channels \cite{rappaport2013millimeter}. 
The narrowband channel of user $k$ is given as
\begin{align}
	\label{eq:channel_geo}
	{\bf h}_{\gamma,k} = \frac{1}{\sqrt{\gamma_k}} \bh_{k} = \sqrt{\frac{N_r}{\gamma_k L_k}}\sum_{\ell = 1}^{L_k}g_{\ell,k} {\bf a}(\phi_{\ell,k})
\end{align}
where $\gamma_k$, $g_{{\ell},k}$, and ${\bf a}(\phi_{{\ell},k})$ are the pathloss, the complex gain of the $\ell${th} propagation path of user $k$, and the ARV for the azimuth AoA of the $\ell$th path of the $k$th user $\phi_{{\ell},k} \in [-\pi/2,\pi/2]$, respectively. 
We assume that $g_{{\ell},k}$ follows an independent and identically distributed (i.i.d.) complex Gaussian distribution, $g_{{\ell},k} \overset{i.i.d}{\sim} \mathcal{CN}(0, 1)$.
The ARV ${\bf a}(\theta)$ for the ULA is given as ${\bf a}(\theta) = \frac{1}{\sqrt{N_r}}\Big[1,e^{-j \pi\vartheta},e^{-j 2\pi \vartheta},\dots,e^{-j (N_r-1)\pi\vartheta}\Big]^T$
where $\vartheta = \frac{2d}{\lambda}\sin(\theta)$ denotes the spatial angle that is related to the physical AoA $\theta$, $d$ represents the distance between antennas, and $\lambda$ is the wave length.
In this paper, $\phi$ and  $\theta$ denote the physical AoAs of a user channel and physical angles of analog combiners, respectively, and $\varphi$ and $\vartheta$ indicate the spatial angles for $\phi$ and $\theta$ where $ \varphi, \vartheta \in [-1,1]$, respectively.


We consider a homogeneous long-term received SNR network\footnote{We remark that the derived results in this paper can also be valid for a heterogeneous long-term received SNR network with minor modification.} for simplicity where the same long-term received SNR for all users is achieved by using a conventional uplink power control that compensates for the large scale fadings \cite{simonsson2008uplink, tejaswi2013survey}. 
Let $\bx =\bP\bs$ be the $N_u \times 1$ vector of transmit signals of $N_u$ users where $\bP ={\rm diag}\{\sqrt{\rho\, \gamma_1},\dots,\sqrt{\rho\,\gamma_{N_u}}\}$ is the matrix of transmit power and $\bs$ is the $N_u \times 1$ transmitted symbol vector.
Let $\bH_\gamma = [\bh_{\gamma,1}, \dots,  \bh_{\gamma,N_u} ]=\bH\bB$ where $\bB = {\rm diag}\{\sqrt{1/\gamma_1},\dots,\sqrt{1/\gamma_{N_u}}\}$.
The received analog baseband signals are given as
\begin{align}
    \nonumber
	\br = \bH_\gamma \bx + \bn =  \sqrt{\rho}\bH\bs + \bn
\end{align} 
where $\bn \sim \cC\cN({\bf 0},\bI_{N_r})$ indicates the additive white Gaussian noise vector.
We also assume zero mean and unit variance for the user symbols $\bs$.
Here, we consider $\rho$ to be the SNR due to the unit variance of the noise. 
The received signal $\br$ is combined via two analog combiners, and we have
\begin{align}
	\nonumber
	\by &= \sqrt{\rho}\bW_{\rm RF_2}^H\bW_{\rm RF_1}^H \bH \bs + \bW_{\rm RF_2}^H\bW_{\rm RF_1}^H\bn  \\ 
	\label{eq:y} 
	& = \sqrt{\rho}\bW_{\rm RF}^H \bH \bs + \bW_{\rm RF}^H\bn
\end{align}
where $\bW_{\rm RF}\! =\! \bW_{\rm RF_1}\bW_{\rm RF_2}$ is the two-stage analog combiner.

Each real and imaginary part of the combined signal $\by$ in \eqref{eq:y} are quantized at ADCs with $b$ quantization bits.
Under the assumptions of a MMSE scalar quantizer, we adopt an additive quantization noise model (AQNM) \cite{fletcher2007robust} which shows reasonable accuracy in the low to medium SNR ranges \cite{orhan2015low}. 
The AQNM provides the approximated linearization of quantization process, which is equivalent to the approximation through Bussgang decomposition for low-resolution ADCs \cite{mezghani2012capacity}.
The quantized signal vector is expressed as \cite{fletcher2007robust,mezghani2012capacity}
\begin{align}
	\label{eq:yq}
	\by_{\rm q} &= \cQ(\by) 
	= \alpha_b \sqrt{\rho}\bW_{\rm RF}^H \bH \bs + \alpha_b \bW_{\rm RF}^H \bn + \bq
\end{align}
where $\cQ(\cdot)$ is the element-wise quantizer,  $\alpha_b = 1 -\beta_b$ is the quantization gain where $\beta_b = \bbE[|y-y_{\rm q}|^2]/\bbE[|y|^2]$, and $\bq$ denotes the quantization noise vector that is uncorrelated with the quantization input $\by$ \cite{fletcher2007robust}.
For $b > 5$ quantization bits, $\beta_b$ is approximated as $\beta_b \approx \frac
{{\pi}\sqrt{3}}{2}2^{-2b}$ for Gaussian transmit signals $\bs \sim \cC\cN({\bf 0}, \bI_{N_u})$.
For $b \leq 5$, the values of $\beta_b$ are listed in Table 1 in \cite{fan2015uplink}.
Here, we assume $\bq \sim \cC\cN({\bf 0}, \bR_{\bq \bq})$, where the covariance matrix $\bR_{\bq\bq}$ is given as \cite{fletcher2007robust}
\begin{align}
	\label{eq:Rqq}
	\bR_{\bq \bq}\! = \!\alpha_b\beta_b{\rm diag}\big\{\rho\bW_{\rm RF}^H \bH \bH^H\bW_{\rm RF}\! +\! \bW_{\rm RF}^H\bW_{\rm RF}\big\}.
\end{align}
Then, $\by_{\rm q}$ is combined through a digital combiner $\bW_{\rm BB}$.




\section{Two-Stage Analog Combining}
\label{sec:twostage}

\subsection{Optimality of Two-Stage Analog Combining}
\label{subsec:analysis}

In this section, we derive a near optimal structure for the first and second analog combiners $\bW_{\rm RF_1}, \bW_{\rm RF_2}$ in low-resolution ADC systems for a general channel by solving a MI maximization problem without a constant modulus condition on the analog combiner $\bW_{\rm RF}$. 
We consider the MI between $\bs$ and $\by_{\rm q}$ under the AQNM model, and it is given as
\begin{align}
	\label{eq:MI}
	\cC(\bW_{\rm RF})= \log_2 \Big|\bI_{N_{\rm RF}} +\rho\alpha_b^2\bD^{-1}\bW_{\rm RF}^H{\bf H}{\bf H}^H\bW_{\rm RF}\Big|. 
\end{align}
where $\bD = \alpha_b^2 \bW_{\rm RF}^H\bW_{\rm RF} + \bR_{\bq\bq}$. 
Based on \eqref{eq:MI}, we formulate a relaxed MI maximization problem as 
\begin{align}
	\label{eq:P1}
    \cP1: ~ \bW_{\rm RF}^{\rm opt} = \argmax_{\bW_{\rm RF}} ~ \cC(\bW_{\rm RF}),~ \text{s.t. } \bW_{\rm RF}^H\bW_{\rm RF}=\bI.
\end{align}
Note that we only assume a semi-unitary constraint on the analog combiner $\bW_{\rm RF}^H\bW_{\rm RF}=\bI_{N_{\rm RF}}$ as in \cite{mo2017hybrid}. 

We first derive an optimal scaling law 
with respect to $N_{\rm RF}$, and provide a solution that achieves the scaling law.
\begin{theorem}[Optimal scaling law]
\label{thm:optimality_two_stage}
    For fixed $N_{\rm RF}/N_r = \kappa$ with $\kappa \in (0,1)$, the MI with the optimal combiner $\bW_{\rm RF}^{\rm opt}$ for the problem $\cP_1$ scales with $N_{\rm RF}$ as 
    \begin{align}  
       \label{eq:C_opt}
        \cC(\bW_{\rm RF}^{\rm opt}) \sim N_u \log_2 N_{\rm RF}
    \end{align}
    and \eqref{eq:C_opt} is achieved by using $\bW_{\rm RF}^\star=\bW_{\rm RF_1}^\star \bW_{\rm RF_2}^\star$ such that:
    \begin{itemize}
        \item[$(i)$] $\bW_{\rm RF_1}^\star = [\bU_{1:N_u} ~ \bU_\perp]$, and 
        \item[$(ii)$] $\bW_{\rm RF_2}^\star$ is any $N_{\rm RF} \times N_{\rm RF}$ unitary matrix that satisfies the constant modulus condition on its elements,  
    \end{itemize}  
    where $\bU_{1:N_u}$ is the matrix of left singular vectors for the first $N_u$ largest singular values of $\bH$ and $\bU_\perp$ is the matrix of any orthonormal vectors such that ${\rm Span}(\bU_\perp) \perp {\rm Span}(\bU_{1:N_u})$.
\end{theorem}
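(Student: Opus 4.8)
The plan is to prove the asymptotic equivalence \eqref{eq:C_opt} by establishing matching upper and lower bounds on $\cC(\bW_{\rm RF}^{\rm opt})$, both of which become transparent once the structure of $\bD$ under the semi-unitary constraint is exposed. First I would simplify $\bD$. Writing $\bG = \bW_{\rm RF}^H\bH\bH^H\bW_{\rm RF}$ and substituting $\bW_{\rm RF}^H\bW_{\rm RF}=\bI$ into \eqref{eq:Rqq}, the quantization covariance collapses to $\bR_{\bq\bq} = \alpha_b\beta_b(\rho\,{\rm diag}\{\bG\} + \bI)$, so that $\bD = \alpha_b(\alpha_b+\beta_b)\bI + \alpha_b\beta_b\rho\,{\rm diag}\{\bG\} = \alpha_b\bI + \alpha_b\beta_b\rho\,{\rm diag}\{\bG\}$ using $\alpha_b+\beta_b=1$. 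The crucial consequence is that $\bD$ is \emph{diagonal}, with $D_{ii} = \alpha_b(1+\beta_b\rho G_{ii})$. I would also record that $\bH\bH^H$ has rank $N_u$, so $\bG$ and hence $\rho\alpha_b^2\bD^{-1}\bG$ have rank at most $N_u$; therefore, writing $\{\mu_i\}$ for the at most $N_u$ nonzero eigenvalues of the PSD matrix $\rho\alpha_b^2\bD^{-1/2}\bG\bD^{-1/2}$, we have $\cC(\bW_{\rm RF}) = \sum_{i=1}^{N_u}\log_2(1+\mu_i)$.

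For the upper bound I would bound the trace. Because $\bD$ is diagonal, ${\rm Tr}\{\bD^{-1}\bG\} = \sum_i G_{ii}/D_{ii} = \sum_i \frac{G_{ii}}{\alpha_b(1+\beta_b\rho G_{ii})}$, and each summand is at most $\frac{1}{\alpha_b\beta_b\rho}$ since $x/(1+\beta_b\rho x)\le 1/(\beta_b\rho)$ for $x\ge 0$. Hence $\sum_i \mu_i = \rho\alpha_b^2{\rm Tr}\{\bD^{-1}\bG\} \le \alpha_b N_{\rm RF}/\beta_b$. Concavity of the logarithm (Jensen applied to the $N_u$ nonzero eigenvalues) then gives $\cC(\bW_{\rm RF}) \le N_u\log_2\bigl(1 + \frac{\alpha_b N_{\rm RF}}{N_u\beta_b}\bigr)$ for every feasible $\bW_{\rm RF}$, and the right-hand side is $\sim N_u\log_2 N_{\rm RF}$. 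This yields $\limsup \cC(\bW_{\rm RF}^{\rm opt})/(N_u\log_2 N_{\rm RF}) \le 1$.

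For achievability I would evaluate $\cC$ at $\bW_{\rm RF}^\star$. Since $\bU_\perp$ is orthogonal to the column space of $\bH$, the first stage gives $(\bW_{\rm RF_1}^\star)^H\bH\bH^H\bW_{\rm RF_1}^\star = {\rm blkdiag}\{{\rm diag}\{\lambda_1^2\{\bH\},\dots,\lambda_{N_u}^2\{\bH\}\},\,{\bf 0}\}$. Conjugating by the constant-modulus unitary $\bW_{\rm RF_2}^\star$, whose entries all have squared magnitude $1/N_{\rm RF}$, leaves the nonzero eigenvalues of $\bG$ unchanged at $\lambda_j^2\{\bH\}$ (unitary similarity), while \emph{flattening} the diagonal to the common value $G_{ii} = \frac{1}{N_{\rm RF}}\sum_{j=1}^{N_u}\lambda_j^2\{\bH\} = \frac{1}{N_{\rm RF}}{\rm Tr}\{\bH\bH^H\}$. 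The channel model gives ${\rm Tr}\{\bH\bH^H\} = \sum_{k=1}^{N_u}\|\bh_k\|^2 \sim N_u N_r$, and with $N_{\rm RF}=\kappa N_r$ this common value is a bounded constant of order $N_u/\kappa$. Thus $\bD = d\,\bI$ with $d = \alpha_b(1+\beta_b\rho\,{\rm Tr}\{\bH\bH^H\}/N_{\rm RF})$ of order one, so $\cC(\bW_{\rm RF}^\star) = \sum_{j=1}^{N_u}\log_2\bigl(1 + \frac{\rho\alpha_b^2}{d}\lambda_j^2\{\bH\}\bigr)$. Using that each $\lambda_j^2\{\bH\}\sim N_r$ (asymptotic orthogonality of the ARVs for fixed $N_u, L_k$ as $N_r\to\infty$, whence $\bH^H\bH/N_r\to\bI_{N_u}$), each term is $\sim\log_2 N_r \sim \log_2 N_{\rm RF}$, giving $\cC(\bW_{\rm RF}^\star)\sim N_u\log_2 N_{\rm RF}$ and hence $\liminf \ge 1$. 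Combining with the upper bound proves \eqref{eq:C_opt}.

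I expect the achievability step to be the main obstacle, specifically justifying that the $N_u$ signal eigenvalues stay of order $N_r$ \emph{while} the flattened diagonal of $\bD$ stays of order one. This decoupling is exactly what the constant-modulus spreading buys: a naive SVD combiner concentrates each gain $\lambda_j^2\{\bH\}\sim N_r$ onto a single RF chain, which inflates $D_{jj}\sim\alpha_b\beta_b\rho N_r$ so the per-stream ratio $\rho\alpha_b^2\lambda_j^2\{\bH\}/D_{jj}$ saturates at the constant $\alpha_b/\beta_b$ and the MI remains bounded; spreading the aggregated gain uniformly keeps every $D_{ii}$ bounded while preserving the signal eigenvalues, which is what unlocks the logarithmic growth. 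I would therefore state $\|\bh_k\|^2\sim N_r$ and $\lambda_j^2\{\bH\}\sim N_r$ explicitly as consequences of the channel model (or cite where they are established), since the entire scaling argument rests on them.
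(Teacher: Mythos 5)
Your proof is correct, and its achievability half is essentially the paper's: you use the same two facts the paper uses, namely that a constant-modulus unitary $\bW_{\rm RF_2}^\star$ flattens the diagonal of $\bW_{\rm RF_2}^{\star H}{\pmb \Lambda}_{N_{\rm RF}}\bW_{\rm RF_2}^{\star}$ to the common value $\sum_i \lambda_i / N_{\rm RF}$ (so the effective quantization-noise matrix $\bD$ becomes a bounded multiple of $\bI$), and that the nonzero singular values of $\bH\bH^H$ grow linearly in $N_r$ by asymptotic orthogonality of the array response vectors. Your closed-form MI at $\bW_{\rm RF}^\star$ matches the paper's expression \eqref{eq:achievable_rate_DFT_U} exactly. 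Where you genuinely diverge is the converse. The paper decomposes an arbitrary semi-unitary $\bW_{\rm RF}$ as $[\bU_{||}~\bU_{\perp}]\bar{\bW}_{\rm RF}$, parametrized by the overlap dimension $m$ with the signal subspace, performs two basis changes, row-normalizes a submatrix $\bG_{\rm sub}$, and then applies Jensen plus the bound $\frac{\|[\bG_{\rm sub}]_{i,:}\|^2}{\|[\bG_{\rm sub}]_{i,:}\|^2 + 1/(\beta_b\rho)} < 1$, finally maximizing $m\log_2(1+\frac{\alpha_b N_{\rm RF}}{\beta_b m})$ over $m$. You shortcut all of this: diagonality of $\bD$ under the semi-unitary constraint gives ${\rm Tr}\{\bD^{-1}\bG\} = \sum_i G_{ii}/D_{ii}$, the scalar inequality $x/(1+\beta_b\rho x) \le 1/(\beta_b\rho)$ caps the per-RF-chain contribution, and ${\rm rank}(\bG)\le N_u$ lets Jensen act on $N_u$ eigenvalues directly, producing the identical bound $N_u\log_2(1+\frac{\alpha_b N_{\rm RF}}{\beta_b N_u})$. (Your scalar inequality is in fact the same inequality the paper uses, just applied without the subspace scaffolding; the paper's $m$-parametrization buys the additional structural insight that the combiner must capture the full $N_u$-dimensional signal subspace, and its $\bG_{\rm sub}$ machinery is reused in Theorem~\ref{thm:equal_eigenvalue_optimal}, but neither is needed for the scaling law itself.) Two small imprecisions worth fixing: $\bH^H\bH/N_r$ does not converge to $\bI_{N_u}$ but to ${\rm diag}\{\frac{1}{L_1}\sum_{\ell}|g_{\ell,1}|^2,\cdots,\frac{1}{L_{N_u}}\sum_{\ell}|g_{\ell,N_u}|^2\}$ (the paper's cited limit), and likewise ${\rm Tr}\{\bH\bH^H\}\sim N_u N_r$ only in expectation; both limits involve almost-surely positive and finite random constants, so your $\Theta(N_r)$ scaling conclusions, and hence the theorem, are unaffected.
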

\begin{proof}
    
We derive an upper bound of $\cC(\bW_{\rm RF})$ and its scaling law with respect to $N_{\rm RF}$, and show that adopting $\bW_{\rm RF}^\star = \bW_{\rm RF_1}^\star \bW_{\rm RF_2}^\star$ in Theorem \ref{thm:optimality_two_stage} achieves the same scaling law of the upper bound. 
Let $\bu_i$ be the $i$th left singular vector of $\bH$.
Then, an arbitrary semi-unitary $\bW_{\rm RF}$ can be decomposed into 
\begin{align}\label{eq:WRF_decomposed}
   \bW_{\rm RF} = [\bU_{||}~\bU_{\perp}] \bar{\bW}_{\rm RF},
\end{align}
where $\bU_{||}$ is an $N_r \!\times\! m$ matrix composed of $m$ orthonormal vectors whose column space is in the subspace of ${\rm Span}(\!\bu_1,\!\cdots\!,\bu_{N_u}\!)$ with $1\!\le \!m\! \le \!N_u$, $\bU_{\perp}$ is an $N_r \!\times \!(N_{\rm RF}-m)$ matrix composed of ($N_{\rm RF}\!-\!m$) orthonormal vectors whose column space is in the subspace of ${\rm Span}^\perp(\bu_1,\cdots,\bu_{N_u})$, and $\bar{\bW}_{\rm RF}$ is an $N_{\rm RF} \times N_{\rm RF}$ unitary matrix.

Using \eqref{eq:WRF_decomposed}, $\bW_{\rm RF}^H \bH\bH^H\bW_{\rm RF}$ in \eqref{eq:MI} can be re-written as
    \begin{align}\nonumber
       &\bW_{\rm RF}^H \bH\bH^H\bW_{\rm RF} \\
       \nonumber
       &= \bar{\bW}_{\rm RF}^H [\bU_{||}~\bU_{\perp}]^H \bU {\pmb \Lambda} \bU^H [\bU_{||}~\bU_{\perp}]\bar{\bW}_{\rm RF}\\ 
       &=\bar{\bW}_{\rm RF}^H \underbrace{\left[ 
       \begin{matrix}
       {\bU_{||}^H\bU_{1:N_u}\pmb \Lambda_{N_u} \bU_{1:N_u}^H\bU_{||}} & {\bf 0} \\
       {\bf 0} & {\bf 0}
       \end{matrix}\right]}_{\triangleq \bQ} \bar{\bW}_{\rm RF} \label{eq:wHHw_change}
    \end{align}
where ${\pmb \Lambda}={\rm diag}\{\lambda_1,\cdots,\lambda_{N_u},0,\cdots,0\} \in \bbC^{N_r\times N_r}$, 
$\pmb \Lambda_{N_u} ={\rm diag}\{ \lambda_1,\dots,\lambda_{N_u}\}$, $\lambda_i$ denotes $\lambda_i\{\bH\bH^H\}$, and  $\bU_{1:N_r}=[\bu_1,\cdots,\bu_{N_r}]$. 
The matrix $\bQ$ is a rank $m$ matrix and can be represented as $\bQ=\bU_{\bQ}\bar{\pmb \Lambda}\bU_{\bQ}^H$, where $\bU_{\bQ}$ is the $N_{\rm RF} \times N_{\rm RF}$ matrix consisting of $N_{\rm RF}$ singular vectors of $\bQ$; and $\bar{\pmb \Lambda}={\rm diag}\{\bar{\lambda}_1,\cdots,\bar{\lambda}_m,0,\cdots,0\} \in \bbC^{N_{\rm RF} \times N_{\rm RF}}$.
Here, $\bar{\lambda}_i$ indicates $\lambda_i\{\bQ\}$. Since $\bU_{\bQ}$ is unitary, we rewrite $\bar{\bW}_{\rm RF}$ as 
\begin{align}\label{eq:basis_change}
    \bar{\bW}_{\rm RF}=\bU_{\bQ}\overline{\bW}_{\rm RF}.
\end{align}
where $\overline{\bW}_{\rm RF}$ is a unitary matrix. 
Substituting \eqref{eq:basis_change} into \eqref{eq:wHHw_change}, we have $\bW_{\rm RF}^H\bH\bH^H\bW_{\rm RF} = \overline{\bW}_{\rm RF}^H\bar{\pmb \Lambda}\overline{\bW}_{\rm RF}$ and \eqref{eq:MI} becomes
\begin{align} 
	\nonumber
    & \cC(\bW_{\rm RF}) \\ 
	\label{eq:c_wrf_wlambdaw}
     &\!= \!\log_2\!\left|\bI\! +\! \frac{\alpha_b}{\beta_b} {\rm diag}^{-1}\!\!\left\{\overline{\bW}_{\rm RF}^H\bar{\pmb \Lambda}\overline{\bW}_{\rm RF}\!+\!\frac{1}{\beta_b\rho}\bI \right\}\!\overline{\bW}_{\rm RF}^H\bar{\pmb \Lambda}\overline{\bW}_{\rm RF} \right|. 
    \end{align}
Let $\bG = \overline{\bW}_{\rm RF}^H \bar{\pmb \Lambda}^{1/2}=[\bG_{\rm sub}~ {\bf 0}]$, where $\bG_{\rm sub}$ is the $N_{\rm RF} \times m$ submatrix of $\bG$. Then, the MI can be upper bounded as 
\vspace{- 0.5 em}
\begin{align} \nonumber
    &\cC(\bW_{\rm RF}) \!= \!\log_2\left|\bI_{N_{\rm RF}}\! +\! \frac{\alpha_b}{\beta_b} \bG^H{\rm diag}^{-1}\!\left\{\|[\bG]_{i,:}\|^2+\frac{1}{\beta_b\rho}\right\}\!\bG \right| \\ \nonumber
    &= \log_2\left|\bI_m + \frac{\alpha_b}{\beta_b} \bG_{\rm sub}^H{\rm diag}^{-1}\left\{\|[\bG_{\rm sub}]_{i,:}\|^2+\frac{1}{\beta_b\rho}\right\}\bG_{\rm sub} \right| \\ \nonumber
    &\overset{(a)}{=}\log_2\left|\bI_m + \frac{\alpha_b}{\beta_b} \tilde{\bG}_{\rm sub}^H\tilde{\bG}_{\rm sub} \right| \\ \nonumber
    &=\sum_{i=1}^m \log_2 \left( 1+\frac{\alpha_b}{\beta_b}\lambda_i\{\tilde{\bG}_{\rm sub}^H\tilde{\bG}_{\rm sub}\} \right) \\ \nonumber
    &\overset{(b)}{\le} m \log_2\left(1+\frac{\alpha_b}{\beta_b m}\sum_{i=1}^m \lambda_i\{\tilde{\bG}_{\rm sub}^H\tilde{\bG}_{\rm sub}\}\right) \\
    &\overset{(c)}{=}m \log_2\left(1+\frac{\alpha_b}{\beta_b m}\sum_{i=1}^{N_{\rm RF}} \frac{\|[\bG_{\rm sub}]_{i,:}\|^2}{\|[\bG_{\rm sub}]_{i,:}\|^2+\frac{1}{\beta_b\rho}} \right)  
    \label{eq:upper_MI_bound}
\end{align}
where $(a)$ comes from letting $\tilde{\bG}_{\rm sub}$ be the matrix whose each row $i$ is given as  $i$th row of $\bG_{\rm sub}$ normalized by $\big(\|[\bG_{\rm sub}]_{i,:}\|^2+\frac{1}{\beta_b\rho} \big)^{1/2}$; $(b)$ follows from Jensen's inequality and the concavity of $\log(1+x)$ for $x>0$; and $(c)$ is from 
\begin{align} \nonumber
	\sum_{i=1}^m \! \lambda_i\{\tilde{\bG}_{\rm sub}^H\tilde{\bG}_{\rm sub}\} \!=\! {\rm Tr}\{\tilde{\bG}_{\rm sub}^H\tilde{\bG}_{\rm sub}\}\!=\!\sum_{i=1}^{N_{\rm RF}}\! \frac{\|[\bG_{\rm sub}]_{i,:}\|^2}{\|[\bG_{\rm sub}]_{i,:}\|^2\!+\!\frac{1}{\beta_b\rho}}.
\end{align}
Then, \eqref{eq:upper_MI_bound} is further upper bounded by $ m\log_2(1+\frac{\alpha_b N_{\rm RF}}{\beta_b m})$ because $\frac{\|[\bG_{\rm sub}]_{i,:}\|^2}{\|[\bG_{\rm sub}]_{i,:}\|^2+\frac{1}{\beta_b\rho}} < 1$.
Since $m\log_2(1+\frac{\alpha_b N_{\rm RF}}{\beta_b m})$ is an increasing function of $m$ for $m>0$, it is maximized with $m=N_{u}$, and scales as $N_u \log_2 N_{\rm RF}$ with $N_{\rm RF} \to \infty$. 

Now, we prove that the scaling law can be achieved by using $\bW_{\rm RF}^\star\! =\! \bW_{\rm RF_1}^\star\bW_{\rm RF_2}^\star$.
Let $ \bC\! \triangleq\! \bW_{\rm RF_2}^{\star H}{\pmb \Lambda}_{N_{\rm RF}}\!\bW_{\rm RF_2}^{\star}$.
From $\bW_{\rm RF}^{\star H}\bH\bH^H\bW_{\rm RF}^{\star}\!\! =\!\! \bW_{\rm RF_2}^{\star H}{\pmb \Lambda}_{N_{\rm RF}}\!\!\bW_{\rm RF_2}^{\star}\!\! =\!\! \bC$
 where ${\pmb \Lambda}_{N_{\rm RF}}\!=\! {\rm diag}\{\lambda_1,\cdots,\lambda_{N_u},0,\cdots,0\} \in \bbC^{N_{\rm RF}\times N_{\rm RF}}$ and \eqref{eq:c_wrf_wlambdaw}, we have 
\begin{align}
	\label{eq:c_w1w2}
    &\cC(\bW_{\rm RF}^{\star})\!=\! \log_2\left|\bI_{N_{\rm RF}} \!+\! \frac{\alpha_b}{\beta_b} {\rm diag}^{-1}\!\left\{\bC+\tfrac{1}{\beta_b\rho}\bI_{N_{\rm RF}}\right\}\!\bC \right| \\
    \label{eq:W2_spread}
    &\overset{(a)}{=}\!\log_2\left|\bI \!+\! \frac{\alpha_b}{\beta_b}\!\left(\frac{\sum_{i=1}^{N_u}\lambda_i}{N_{\rm RF}}\!+\!\frac{1}{\beta_b\rho}\right)^{\!\!-1}\!\!\!\!\!\bW_{\rm RF_2}^{\star H}{\pmb \Lambda}_{N_{\rm RF}}\bW_{\rm RF_2}^{\star} \right| \\ 
    \label{eq:achievable_rate_DFT_U}
    &= \sum_{k=1}^{N_u} \log_2\left(1+\frac{\alpha_b\rho N_{\rm RF} \lambda_k/N_r}{\kappa+(1-\alpha_b)\rho \sum_{i=1}^{N_u}\lambda_i/N_r} \right) \\ 
    \nonumber
    &\overset{(b)}{\sim} N_u \log_2 N_{\rm RF},~\text{as}~N_{\rm RF}\to\infty.
\end{align}
where $\kappa = N_{\rm RF}/N_r$. 
Here, $\!(a)$ follows from the fact that all diagonal entries of $\bW_{\rm RF_2}^{\star H}{\pmb \Lambda}_{N_{\rm RF}}\!\bW_{\rm RF_2}^{\star}$ are equal to each other as $d_j={\sum_{i=1}^{N_u}\lambda_i}/{N_{\rm RF}}$, $\forall j$ due to the constant modulus property of $\bW_{\rm RF_2}^{\star}$; $(b)$ is from the fact that as $N_{\rm RF}\to\infty$ ($N_r \to \infty$), we have  $\frac{1}{N_r}\bH^H\bH \to {\rm diag}\{\frac{1}{L_1}\sum_{\ell=1}^{L_1}|g_{\ell,1}|^2,\cdots, \frac{1}{L_{N_u}}\sum_{\ell=1}^{L_{N_u}}|g_{\ell,N_u}|^2\}$ \cite{ngo2014aspects} from the channel $\bh_k$ \eqref{eq:channel_geo} and the law of large numbers, which implies
\begin{align} \nonumber
    \frac{\lambda_i}{N_r} \to  \frac{1}{L_i}\sum_{\ell=1}^{L_i}|g_{\ell, i}|^2< \infty,~\text{for}~i=1,\cdots,N_u.
\end{align}
This completes the proof of Theorem \ref{thm:optimality_two_stage}. 
\end{proof}

We note that $\bW_{\rm RF_1}^\star$ in Theorem \ref{thm:optimality_two_stage} aggregates all channel gains into the smaller dimension and provides ($N_{\rm RF} - N_u$) extra dimensions.
As observed in \eqref{eq:W2_spread}, $\bW_{\rm RF_2}^\star$, then, spreads the aggregated channels gains over all $N_{\rm RF}$ dimensions, thereby reducing the quantization error by exploiting the extra dimensions.
Accordingly, the proposed solution $\bW_{\rm RF}^\star=\bW_{\rm RF_1}^\star\bW_{\rm RF_2}^\star$ achieves the optimal scaling law \eqref{eq:C_opt} by reducing the quantization error as $N_{\rm RF}$ increases. 
We note that $\bW_{\rm RF_1}^\star$ itself is the conventional optimal analog combiner for the problem $\cP1$ without quantization error, i.e., for a perfect quantization system with infinite quantization resolution. 
\begin{corollary}\label{cor:conventional_sol}
      The conventional optimal solution $\bW_{\rm RF}^{\rm cv}= [\bU_{1:N_{u}} ~ \bU_\perp]$ for perfect quantization systems cannot achieve the optimal scaling law \eqref{eq:C_opt} in coarse quantization systems, and the achievable MI with $\bW_{RF}^{\rm cv}$ is upper bounded by
    \begin{align}
   	 	\label{eq:bounded_performance}
        \cC\big(\bW_{\rm RF}^{\rm cv}\big)  <  \cC_{\rm svd}^{\rm ub} = N_u \log_2\big(1+{\alpha_b}/{(1-\alpha_b)}\big). 
    \end{align} 
\end{corollary}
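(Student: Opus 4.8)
The plan is to specialize the closed form already obtained in the proof of Theorem \ref{thm:optimality_two_stage} to the feasible combiner $\bW_{\rm RF}^{\rm cv} = [\bU_{1:N_u}~\bU_\perp]$, which is exactly $\bW_{\rm RF_1}^\star$ with \emph{no} spreading stage (equivalently $\bW_{\rm RF_2}=\bI$). First I would compute the effective Gram matrix $\bW_{\rm RF}^{{\rm cv}H}\bH\bH^H\bW_{\rm RF}^{\rm cv}$. Writing $\bH\bH^H = \bU{\pmb \Lambda}\bU^H$ with at most $N_u$ nonzero eigenvalues, the columns of $\bW_{\rm RF}^{\rm cv}$ consist of the top $N_u$ left singular vectors together with vectors drawn from ${\rm Span}^\perp(\bu_1,\cdots,\bu_{N_u})$, so all cross terms vanish and $\bU_\perp$ captures no signal energy. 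Hence $\bW_{\rm RF}^{{\rm cv}H}\bH\bH^H\bW_{\rm RF}^{\rm cv} = {\pmb \Lambda}_{N_{\rm RF}} = {\rm diag}\{\lambda_1,\cdots,\lambda_{N_u},0,\cdots,0\}$, precisely the diagonal matrix $\bC$ appearing in \eqref{eq:c_w1w2} when $\bW_{\rm RF_2}=\bI$.

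Next I would substitute this diagonal matrix into the per-combiner expression \eqref{eq:c_wrf_wlambdaw}. Because both ${\pmb \Lambda}_{N_{\rm RF}}$ and the ${\rm diag}^{-1}\{\cdot\}$ factor are diagonal, the log-determinant decouples into a sum of per-dimension terms, giving
\[
\cC(\bW_{\rm RF}^{\rm cv}) = \sum_{i=1}^{N_u} \log_2\!\left(1 + \frac{\alpha_b}{\beta_b}\,\frac{\lambda_i}{\lambda_i + \tfrac{1}{\beta_b\rho}}\right),
\]
where the indices $i>N_u$ contribute $\log_2 1 = 0$ since the corresponding diagonal entries are zero. This already exposes the mechanism: unlike the spread combiner of \eqref{eq:W2_spread}, the conventional solution leaves each signal dimension carrying its full gain $\lambda_i$, so the per-dimension quantization penalty $\lambda_i/(\lambda_i+\tfrac{1}{\beta_b\rho})$ never improves as $N_{\rm RF}$ grows.

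The inequality then follows by observing that $\frac{\lambda_i}{\lambda_i + 1/(\beta_b\rho)} < 1$ for every finite $\lambda_i\ge 0$ (since $1/(\beta_b\rho)>0$), whence each summand is strictly below $\log_2(1+\alpha_b/\beta_b)$. Summing the $N_u$ nonzero terms and substituting $\beta_b = 1-\alpha_b$ yields
\[
\cC(\bW_{\rm RF}^{\rm cv}) < N_u \log_2\!\left(1 + \frac{\alpha_b}{1-\alpha_b}\right) = \cC_{\rm svd}^{\rm ub},
\]
which is \eqref{eq:bounded_performance}. Since this bound is a constant independent of $N_{\rm RF}$, whereas Theorem \ref{thm:optimality_two_stage} gives the optimum scaling $N_u\log_2 N_{\rm RF}\to\infty$, the conventional solution provably cannot attain the optimal scaling law \eqref{eq:C_opt}.

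I do not expect a genuine obstacle, as the result is a direct evaluation of \eqref{eq:c_wrf_wlambdaw} at a particular feasible point. The only step needing mild care is confirming that $\bW_{\rm RF}^{{\rm cv}H}\bH\bH^H\bW_{\rm RF}^{\rm cv}$ is \emph{exactly} diagonal carrying the top $N_u$ singular values, i.e., that the off-diagonal couplings and the extra dimensions in $\bU_\perp$ contribute nothing. It is precisely this absence of a spreading stage that pins the quantization noise to each individual signal dimension rather than diluting it across the $N_{\rm RF}$ RF chains, and thereby caps the achievable MI at a finite constant.
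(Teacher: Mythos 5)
Your proof is correct and follows essentially the same route as the paper: both evaluate the MI expression from Theorem \ref{thm:optimality_two_stage}'s proof at $\bW_{\rm RF_2}=\bI$, obtain $\cC(\bW_{\rm RF}^{\rm cv})=\sum_{i=1}^{N_u}\log_2\bigl(1+\frac{\alpha_b\lambda_i}{\beta_b\lambda_i+1/\rho}\bigr)$ (your form is algebraically identical), and bound each summand strictly by $\log_2(1+\alpha_b/\beta_b)$ using $1/\rho>0$. Your added explicit verification that $\bU_\perp$ captures no signal energy and your closing remark that a constant bound precludes the $N_u\log_2 N_{\rm RF}$ scaling are fine elaborations of steps the paper leaves implicit.
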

\begin{proof}
     From \eqref{eq:c_w1w2}, we have the following MI with $\bW_{\rm RF_2} = \bI$: 
     \begin{align}
        \nonumber
         &\cC\big(\bW_{\rm RF}^{\rm cv}\big)= \log_2\left|\bI+ \frac{\alpha_b}{\beta_b} {\rm diag}^{-1}\left\{{\pmb \Lambda}_{N_{\rm RF}}+\tfrac{1}{\beta_b\rho}\bI\right\}{\pmb \Lambda}_{N_{\rm RF}} \right|\\
         \nonumber
         &= \sum_{i=1}^{N_u} \log_2\left(1+\frac{\alpha_b\lambda_i}{\beta_b\lambda_i + {1}/{\rho}}\right)\stackrel{(a)}< N_u \log_2\left(1+\frac{\alpha_b}{\beta_b}\right).
    \end{align}
    where $(a)$ comes from $\rho > 0 $.
\end{proof} 
Corollary \ref{cor:conventional_sol} shows that although the conventional optimal combiner $\bW_{\rm RF}^{\rm cv}$ captures the entire channel gains, the MI does not scale as the MI with  $\bW_{\rm RF}^\star=\bW_{\rm RF_1}^\star\bW_{\rm RF_2}^\star$.
Since the channel gains after $\bW_{\rm RF}^{\rm cv}$ are concentrated on only $N_u$ RF chains, this results in severe quantization errors at each of the $N_u$ RF chains. 
Therefore, while the channel gains $\{\lambda_i\}$ increase with $N_r$, the quantization errors also increase, leading to the bounded MI in \eqref{eq:bounded_performance}. 
This confirms the benefit of using the proposed second analog combiner $\bW_{\rm RF_2}^\star$. 
In addition, we show the optimality of the proposed two-stage combiner in maximizing the MI for a special case.


\begin{theorem}\label{thm:equal_eigenvalue_optimal}
    When all singular values $\lambda_i$ of $\bH^H\bH$ are equal, the two-stage analog combining solution  $\bW_{\rm RF}^\star =\bW_{\rm RF_1}^\star \bW_{\rm RF_2}^\star$ in Theorem~\ref{thm:optimality_two_stage} maximizes the MI  in \eqref{eq:P1} with finite $N_{\rm RF}$.
    The corresponding optimal MI is given as
    \begin{align}\label{eq:the_optimal_rate_AQNM}
		 \cC_{\rm opt} \! \triangleq \!\cC(\bW_{\rm RF}^\star) \! = \!N_u \!\log_2\!\!\left(\!1\!+\!\frac{\alpha_b\lambda N_{\rm RF}}{\lambda N_u\!(1\!-\!\alpha_b) \!+\!{N_{\rm RF}}/{\rho}} \!\right)\!.
    \end{align}
\end{theorem}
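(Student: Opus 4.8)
The plan is to sharpen the upper bound already established in the proof of Theorem~\ref{thm:optimality_two_stage} so that, under the equal-singular-value assumption, it becomes tight and matches the value achieved by $\bW_{\rm RF}^\star$. I would start from the intermediate bound \eqref{eq:upper_MI_bound}, which is valid for every semi-unitary $\bW_{\rm RF}$ whose projection onto ${\rm Span}(\bu_1,\dots,\bu_{N_u})$ has dimension $m$ with $1\le m\le N_u$ (the range $m\le N_u$ being forced by $N_u\le N_{\rm RF}$).

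First I would specialize the matrix $\bQ$ to the equal-eigenvalue case. Writing $\lambda_i=\lambda$ for all $i$ gives $\pmb\Lambda_{N_u}=\lambda\bI_{N_u}$, so $\bQ=\lambda\,\bU_{||}^H\bU_{1:N_u}\bU_{1:N_u}^H\bU_{||}=\lambda\bI_m$, because the columns of $\bU_{||}$ lie in ${\rm Span}(\bu_1,\dots,\bu_{N_u})$ and are left invariant by the projector $\bU_{1:N_u}\bU_{1:N_u}^H$. Hence $\bar{\lambda}_j=\lambda$ for $j=1,\dots,m$, and the total row energy ${\rm Tr}\{\bG_{\rm sub}^H\bG_{\rm sub}\}=\sum_{i=1}^{N_{\rm RF}}\|[\bG_{\rm sub}]_{i,:}\|^2=m\lambda$ is fixed, independent of the residual unitary freedom $\overline{\bW}_{\rm RF}$. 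Next I would exploit concavity: the map $f(x)=x/\bigl(x+\tfrac{1}{\beta_b\rho}\bigr)$ is concave and increasing on $x\ge0$, so under the fixed-trace constraint Jensen's inequality yields $\sum_{i=1}^{N_{\rm RF}}f(\|[\bG_{\rm sub}]_{i,:}\|^2)\le N_{\rm RF}\,f(m\lambda/N_{\rm RF})$, with equality precisely when all rows of $\bG_{\rm sub}$ share the common norm $m\lambda/N_{\rm RF}$. Substituting this into \eqref{eq:upper_MI_bound} and simplifying gives the purely $m$-dependent bound $\cC(\bW_{\rm RF})\le g(m)\triangleq m\log_2\bigl(1+\tfrac{\alpha_b\lambda N_{\rm RF}}{\beta_b m\lambda+N_{\rm RF}/\rho}\bigr)$.

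Finally I would establish that $g(m)$ is increasing on $m>0$, so its maximum over $m\in\{1,\dots,N_u\}$ is attained at $m=N_u$. Treating $m$ as continuous, differentiating $g$, and invoking $\ln(1+x)>x/(1+x)$ for $x>0$ together with $\beta_b m\lambda<\beta_b m\lambda+N_{\rm RF}/\rho$ shows the derivative is positive; therefore $\cC(\bW_{\rm RF})\le g(N_u)=\cC_{\rm opt}$, which reduces to \eqref{eq:the_optimal_rate_AQNM} after substituting $\beta_b=1-\alpha_b$. To close the argument I would note that $\bW_{\rm RF}^\star$ attains every inequality: the Jensen step $(b)$ of \eqref{eq:upper_MI_bound} because the equal effective gains make all per-stream SINRs identical, the concavity step because the constant-modulus $\bW_{\rm RF_2}^\star$ equalizes all row norms, and $m=N_u$ because $\bW_{\rm RF_1}^\star$ captures the full channel subspace; equivalently, the value \eqref{eq:achievable_rate_DFT_U} of Theorem~\ref{thm:optimality_two_stage} specialized to $\lambda_i=\lambda$ already equals $\cC_{\rm opt}$. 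The main obstacle is the monotonicity of $g(m)$: raising $m$ simultaneously increases the logarithmic prefactor and shrinks the per-stream argument through the $\beta_b m\lambda$ term, so the competing gain and loss must be balanced carefully, and this is exactly where the logarithmic inequality does the work.
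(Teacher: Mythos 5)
Your proposal is correct and follows essentially the same route as the paper: specialize the bound \eqref{eq:upper_MI_bound} to the equal-eigenvalue case, apply Jensen's inequality to the concave map $x \mapsto x/(x+\tfrac{1}{\beta_b\rho})$ under the fixed row-energy budget $m\lambda$, show the resulting bound $g(m)$ is increasing so that $m=N_u$ is optimal, and close by matching with \eqref{eq:achievable_rate_DFT_U} evaluated at $\lambda_i=\lambda$. Your direct computation that the nonzero block of $\bQ$ equals $\lambda\bI_m$ (so $\bar{\lambda}_i=\lambda$ exactly) is a slight streamlining of the paper's argument, which instead shows $\bar{\lambda}_i\le\lambda$ via sub-multiplicativity and then passes to the dominating matrix $\bG^\star_{\rm sub}$; likewise your explicit derivative check of the monotonicity of $g(m)$ fills in a step the paper only asserts.
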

\begin{proof}
    Recall $\bG=\overline{\bW}_{\rm RF}^H\bar{\pmb \Lambda}^{1/2}=[\bG_{\rm sub}~{\bf 0}]$ in the proof of Theorem \ref{thm:optimality_two_stage}, where $\bG_{\rm sub}$ is the $N_{\rm RF} \times m$ submatrix and $\bar{\pmb \Lambda}\!=\!{\rm diag}\{\bar{\lambda}_1,\!\cdots\!,\bar{\lambda}_m,0,\!\cdots\!,0\}$ where $\bar \lambda_i\! =\! \lambda_i\{\bQ\}$ and $\bQ$ is defined in \eqref{eq:wHHw_change}. 
    From the assumption of $\lambda_1=\cdots=\lambda_{N_u} = \lambda $,  we have 
    \begin{align}\nonumber
		\max_{\bx\in\mathbb{C}^{N_{\rm RF}}:\|\bx\|=1}\bx^H \bQ \bx &= \max_{\by\in\mathbb{C}^{m}:\|\by\|=1}\lambda \|\bU_{1:N_u}^H\bU_{||}\by\|^2 \\ \nonumber
        &\overset{(a)}{\le} \max_{\by\in\mathbb{C}^{m}:\|\by\|=1}\lambda \|\bU_{1:N_u}^H\|^2\|\bU_{||}\|^2\|\by\|^2 \\ \nonumber
        &=\lambda,
    \end{align}
    where 
	$(a)$ comes from the sub-multiplicativity of the norm, and the last equality holds by $\|\bU_{1:N_u}^H\|=1$ and $\|\bU_{||}\|=1$. 
	This implies  $\bar{\lambda}_i \le \lambda$ for $i=1,\cdots,m$. 
	Therefore, 
    $\|[\bG_{\rm sub}]_{j,:}\|^2$ is maximized for any $\overline{\bW}_{\rm RF}$ when $\bar{\lambda}_i = \lambda$ for $i=1,\cdots,m$. 
    
    We consider the upper bound of $\cC(\bW_{\rm RF})$ in \eqref{eq:upper_MI_bound} and define  $  \bG^\star_{\rm sub}\!=\!\overline{\bW}_{\rm RF}^H\!\!
        \begin{bmatrix}\!
        \sqrt{\lambda} \bI_m \\ 
        {\bf 0}
        \!\end{bmatrix}\!\!\in\! \bbC^{N_{\rm RF} \!\times\! m}$.
    Then, the upper bound of $\cC(\bW_{\rm RF})$ in \eqref{eq:upper_MI_bound} is further upper bounded by
    \begin{align} \nonumber
        \cC(\bW_{\rm RF}) &\le m \log_2\left(1+\frac{\alpha_b}{\beta_b m}\sum_{i=1}^{N_{\rm RF}} \frac{\|[\bG^\star_{\rm sub}]_{i,:}\|^2}{\|[\bG^\star_{\rm sub}]_{i,:}\|^2+\frac{1}{\beta_b\rho}} \right) \\ \nonumber
        &\overset{(a)}{\le} m \log_2\left(\!1+ \frac{\alpha_b N_{\rm RF}\sum_{i=1}^{N_{\rm RF}}\|[\bG^\star_{\rm sub}]_{i,:}\|^2}{\beta_b m\left(\sum_{i=1}^{N_{\rm RF}}\|[\bG^\star_{\rm sub}]_{i,:}\|^2+\frac{N_{\rm RF}}{\beta_b\rho}\right)} \! \right) \\
        &\overset{(b)}{=}m\log_2\left(1+ \frac{\alpha_b \lambda N_{\rm RF} }{\lambda m {\beta_b}+ {N_{\rm RF}}/{\rho}} \right), \label{eq:upper_MI_equal_eigen}
    \end{align}
    where $(a)$ is from Jensen's inequality and the concavity of ${x}/{(x+1)}$ for $x>0$; and $(b)$ is from $ \sum_{i=1}^{N_{\rm RF}}\|[\bG^\star_{\rm sub}]_{i,:}\|^2 = \|\bG^\star_{\rm sub}\|_F^2 = \lambda m.$
     Note that \eqref{eq:upper_MI_equal_eigen} is maximized with $m=N_u$ as \eqref{eq:upper_MI_equal_eigen} is an increasing function of $m$ for $m>0$ with $\alpha_b,\lambda,\rho,N_{\rm RF}>0$. 
     By putting $\lambda_1=\cdots=\lambda_{N_u} = \lambda$ into \eqref{eq:achievable_rate_DFT_U}, it is shown that the upper bound of $\cC(\bW_{\rm RF})$ in \eqref{eq:upper_MI_equal_eigen} with $m=N_u$ can be achieved by adopting $\bW_{\rm RF}^{\star}=\bW_{\rm RF_1}^{\star}\bW_{\rm RF_2}^{\star}$.
\end{proof}

\begin{algorithm}[t!]
\label{algo:ARV}
 {\bf Initialization}: set $\bW_{\rm RF_1} = $ empty matrix, $\bH_{\rm rm}=\bH$, and $\cV = \{\vartheta_1,\dots, \vartheta_{|\mathcal{V}|}\}$ where $\vartheta_n = \frac{2n}{|\mathcal{V}|}-1$ \\
 \For{$i = 1:N_{\rm RF}$}{
    Maximum channel gain aggregation
 \begin{enumerate}
     \item[(a)] $\ba(\vartheta^\star) = \argmax_{\vartheta \in \mathcal{V}} \|\ba(\vartheta)^H\bH_{\rm rm}\|^2$
     \vspace{0.1em}
     \item[(b)] $\bW_{\rm RF_1} = \big[\ \bW_{\rm RF_1}\ |\  \ba(\vartheta^\star)\ \big] $
     \vspace{0.1em}
     \item[(c)] $\bH_{\rm rm}=\cP_{\ba(\vartheta^\star)}^{\perp}\bH_{\rm rm}$, where $\cP_{\ba(\vartheta)}^{\perp}\!=\!\bI\! -\! \ba(\vartheta)\ba(\vartheta)^H$
     \vspace{0.1em}
     \item[(d)]$\cV = \cV\setminus\{\vartheta^\star\}$
     \vspace{0.1em}
 \end{enumerate}
 }
Set $\bW_{\rm RF_2} = \bW_{\rm DFT}$ where $\bW_{\rm DFT}$ is a normalized $N_{\rm RF} \times N_{\rm RF}$ DFT matrix.\\
\Return{\ }{$\bW_{\rm RF_1}$ and $\bW_{\rm RF_2}$\;}
\caption{ARV-based TSAC}
\end{algorithm}

\subsection{Two-Stage Analog Combining Algorithm}


We propose an ARV-based two-stage analog combining (ARV-TSAC) algorithm for mmWave channels to implement the derived combining solution in Theorem \ref{thm:optimality_two_stage} under the constant modulus constraint.
Theorem \ref{thm:optimality_two_stage} provides a practical analog combiner structure that is implementable with a two-stage analog combiner $\bW_{\rm RF} = \bW_{\rm RF_1}\bW_{\rm RF_2}$ under the constant modulus constraint.
Since the ARV can be considered as a basis of mmWave channels as shown in \eqref{eq:channel_geo}, finding a set of ARVs that are orthogonal to each other and collects most channel gains can perform similar to using $[\bU_{1:N_u} \bU_\perp]$ for $\bW_{\rm RF_1}$.
In this regard,
we adopt an ARV-codebook based maximum channel gain aggregation approach  
to capture most channel gains into the fewer RF chains by exploiting the sparse nature of mmWave channels. 
To this end, we first set the codebook of the evenly spaced spatial angles $\cV = \{\vartheta_1,\dots,\vartheta_{|\mathcal{V}|}\}$.
To avoid excessive search complexity for the exhaustive method, the proposed algorithm operates in greedy manner to find the best $N_{\rm RF}$ ARVs with greatly reduced complexity.

The proposed ARV-TSAC method is described in Algorithm~\ref{algo:ARV}. 
The ARV $\ba(\vartheta^\star)$ which captures the largest channel gain in the remaining channel dimensions $\bH_{\rm rm}$ is selected in Step (a) and composes a column of $\bW_{\rm RF_1}$ in Step (b). 
In Step (c), $\bH_{\rm rm}$ is projected onto the subspace of ${\rm Span}^\perp(\ba(\vartheta^\star))$ to remove the channel gain on the space of $\ba(\vartheta^\star)$.
Algorithm~\ref{algo:ARV} repeats these steps until $N_{\rm RF}$ ARVs are selected from the codebook $\cV$.
The algorithm uses a fixed DFT matrix $\bW_{\rm DFT}$ to implement $\bW_{{\rm RF}_2}^\star$ as $\bW_{\rm DFT}$ satisfies both the unitary and constant modulus constraints.

\begin{figure}[!t]\centering
	\includegraphics[scale = 0.33]{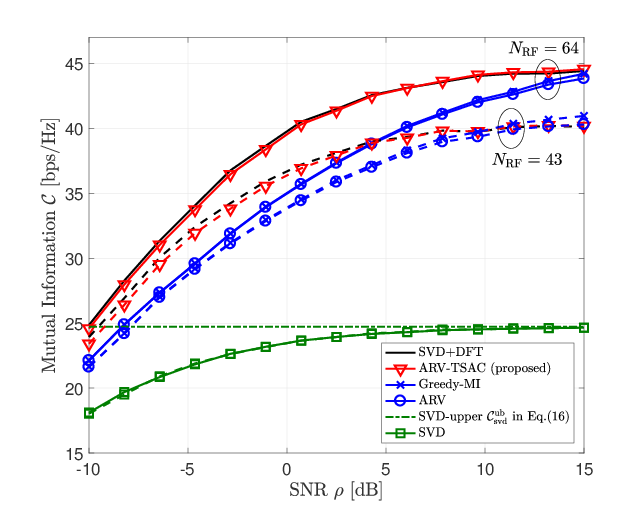}
	\caption{The MI simulation results for $N_r = 128$ receive antennas,  $N_u = 8$ users, $\lambda_L = 3$ average channel paths, $b=2$ quantization bits, and $N_{\rm RF} \in \{43, 64\}$ RF chains that are  $\lceil N_r/3 \rceil$ and $\lceil N_r/2\rceil$, respectively. }
	\label{fig:MI_SNR}
	\vspace{-1em}
\end{figure}
Employing the fixed DFT matrix for the second analog combiner $\bW_{\rm RF_2}=\bW_{\rm DFT}$ provide benefits in reducing implementation complexity and power consumption because $\bW_{\rm DFT}$ does not depend on the channel and can be constructed by using passive (fixed) phase shifters.
Therefore, $\bW_{\rm RF_2}$ can be implemented with very low complexity and power consumption in the practical system.
In addition, when $N_{\rm RF}$ is a power of two, the fast Fourier transform version of the DFT calculation can be applied, which reduces the number of passive phase shifters for $\bW_{\rm RF_2}$ to $N_{\rm RF}\log_2 N_{\rm RF}$.
\section{Simulation Results}
\label{sec:sim}

In this section, we evaluate the performance of the proposed two-stage analog combing algorithm in the MI and ergodic sum rate. 
In the simulations, we set the codebook size to be $|\cV|=N_r$, which guarantees $\bW_{\rm RF}^H\bW_{\rm RF} = \bI_{N_{\rm RF}}$.  
In the simulations, we evaluate the following cases:
\begin{enumerate}
    \item ARV-TSAC: proposed two-stage analog combining.
    \item ARV: analog combining only with $\bW_{\rm RF} = \bW_{\rm RF_1}$ selected from the ARV-TSAC.
    \item SVD+DFT: two-stage analog combining with $\bW_{\rm RF_1} = \bU_{1:N_{\rm RF}}$ and $\bW_{\rm RF_2} = \bW_{\rm DFT}$ based on Theorem \ref{thm:optimality_two_stage}.
    \item SVD: one-stage analog combining $\bW_{\rm RF} = \bU_{1:N_{\rm RF}}$.
    \item Greedy-MI: one-stage analog combining with greedy-based MI maximization.
\end{enumerate}
Note that the SVD+DFT and SVD cases are infeasible to implement in practice due to the constant modulus constraint.
Here, the greedy-MI maximization method is also evaluated to provide a reference performance.
At each iteration, the greedy method searches for a single ARV from the codebook $\cV$ which maximizes the MI with the previously selected ARVs, and repeats until selects $N_{\rm RF}$ ARVs.
For mmWave channels, we adopt $L_k = {\rm max}\{1, {\rm Poisson}(\lambda_L)\}$ \cite{akdeniz2014millimeter} unless mentioned otherwise, where $\lambda_L$ is the average number of channel paths.


\subsection{Mutual Information}

\begin{figure}[t]
\centering
$\begin{array}{c c}
{\resizebox{0.9\columnwidth}{!}
{\includegraphics{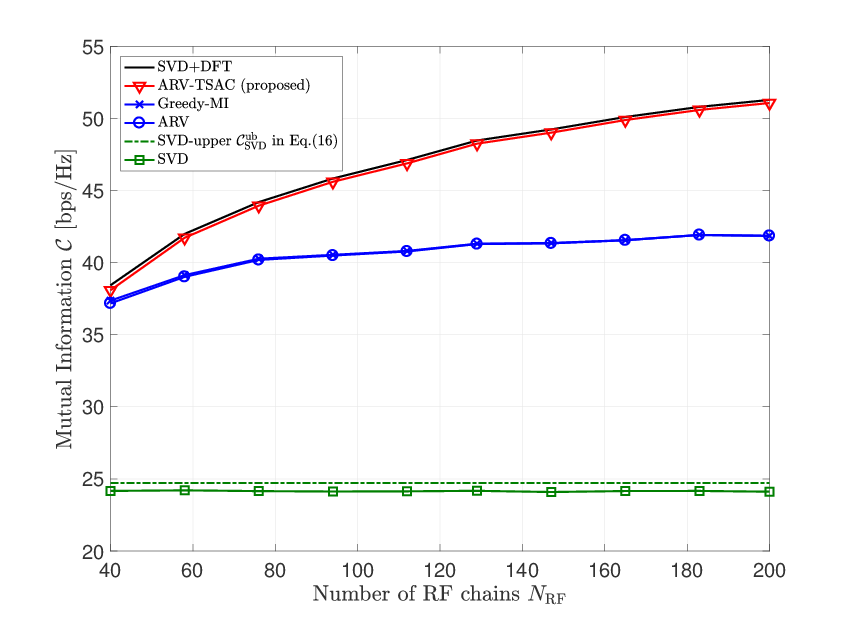}}
}\\ \mbox{\small(a) $N_r = 256$}\\
{\resizebox{0.9\columnwidth}{!}
{\includegraphics{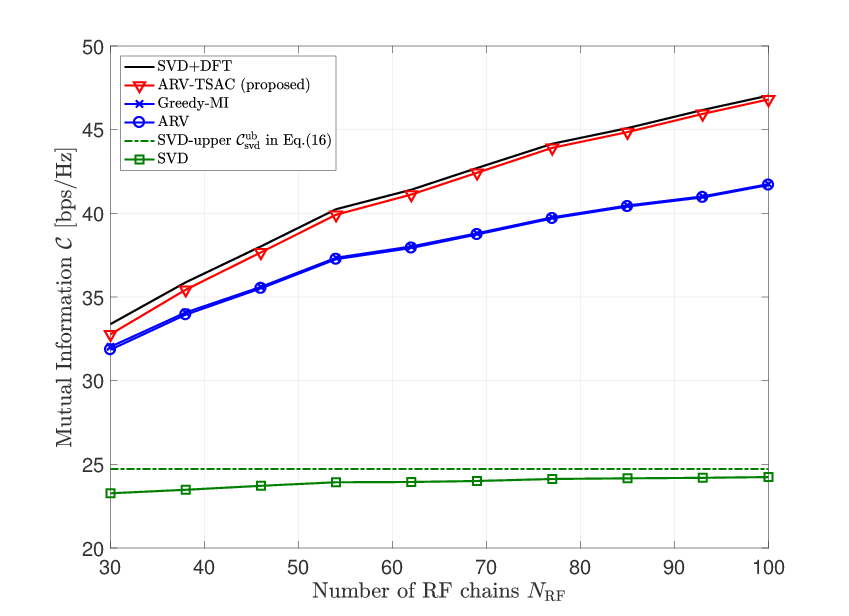}}
}\\\mbox{\small(b) $\kappa = 1/3$}
\end{array}$
\caption{The MI simulation results with $N_u = 8$ users, $\lambda_L = 4$ average channel paths, $b=2$ quantization bits, and $\rho = 0$ dB SNR for (a) $N_r = 256$ receive antennas and (b) $\kappa = N_{\rm RF}/N_r = 1/3$.}
\label{fig:MI_NRF}
\vspace{ -1 em}
\end{figure}

Fig. \ref{fig:MI_SNR} shows the MI simulation results for $N_r = 128$, $N_{\rm RF} \in \{43, 64\}$, $N_u = 8$, $\lambda_L = 3$, and $b=2$ with respect to the SNR $\rho$.
The proposed ARV-TSAC algorithm shows a similar MI as does the SVD+DFT case, and they achieve the highest MI over the most SNR ranges.
Having a gap from the ARV-TSAC, the Greedy-MI and ARV cases achieve similar MI to each other. 
Note that the MI gap decreases as $\rho$ increases in the high SNR regime, and the Greedy-MI and ARV cases with $N_{\rm RF} = 43$ attain the higher MI than that of the SVD+DFT and ARV-TSAC in the very high SNR regime.
This is because the simulated channel environment does not meet the optimality condition for the two-stage analog combining solution.
As more RF chains are used, however, the MI gap becomes larger and the performance reversal would happen in even the higher SNR regime, which corresponds to the intuition derived in Section \ref{sec:twostage}.
The SVD case results in the worst MI performance and its MI converges to the theoretic upper bound $\cC_{\rm svd}^{\rm ub}$ in \eqref{eq:bounded_performance}.

The MI simulation results are shown in Fig. \ref{fig:MI_NRF}  with $N_u = 8$, $\lambda_L = 4$, $b=2$, and $\rho = 0$ dB in terms of  $N_{\rm RF}$.
In Fig. \ref{fig:MI_NRF}(a), the MIs of SVD+DFT and ARV-TSAC   increase logarithmically with fixed $N_r = 256$, which corresponds to the scaling law in Theorem \ref{thm:optimality_two_stage}.
The Greedy-MI, ARV, and SVD cases, however, show only marginal increase of the MI as $N_{\rm RF}$ increases. 
In Fig. \ref{fig:MI_NRF}(b), $\kappa = N_{\rm RF}/N_r$ is fixed to be $\kappa = 1/3$.
The Greedy-MI and ARV cases increase more slowly compared to the SVD+DFT and ARV-TSAC cases.
This is because more channel gains are collected as $N_r$ increases for all cases, but the two-stage combining can reduce more quantization error as $N_{\rm RF}$ increases.
Thus, the MI gap between the two-stage and one-stage combining cases increases as $N_{\rm RF}$ increases. 
\section{Conclusion}
\label{sec:con}

In this paper, we derived a near optimal two-stage analog combining solution for an unconstrained MI maximization problem in hybrid MIMO systems with low-resolution ADCs. 
We showed that unlike a conventional optimal solution, the derived solution achieves the optimal scaling law and maximizes the mutual information for a homogeneous channel singular value case.
We further implemented the solution in the proposed two-stage analog combining architecture that decouples the channel gain aggregation and spreading functions in the solution into two cascaded analog combiners.
Simulation results validated the key insights obtained in this paper and demonstrated that the proposed two-stage analog combining algorithm outperforms conventional one-stage algorithms. 
Therefore, considering the low complexity in deploying the second analog combiner, the proposed two-stage analog combining architecture can provide a better performance and power tradeoff than a conventional hybrid architecture for future wireless communications.

\bibliographystyle{IEEEtran}
\bibliography{ICCtwostage.bib}

\end{document}